\newtheorem{theorem}{Theorem}
\newtheorem{definition}{Definition}
\newtheorem{lemma}[theorem]{Lemma}
\newcommand\scalemath[2]{\scalebox{#1}{\mbox{\ensuremath{\displaystyle #2}}}}
\renewcommand*\env@matrix[1][\arraystretch]{%
  \edef\arraystretch{#1}%
  \hskip -\arraycolsep
  \let\@ifnextchar\new@ifnextchar
  \array{*\c@MaxMatrixCols c}}
\title{Lossy Transmission of Correlated Sources over Two-Way Channels}
\author{Jian-Jia~Weng, Fady~Alajaji, and Tam{\'a}s~Linder%
\thanks{The authors are with the Department of Mathematics and Statistics, Queen’s University, Kingston, ON K7L 3N6, Canada (email: jian-jia.weng@queensu.ca, fa@queensu.ca, linder@mast.queensu.ca).}
\thanks{This work was supported in part by NSERC of Canada.}}
\begin{document}

\maketitle
\thispagestyle{empty}
\begin{abstract}
Achievability and converse results for the lossy transmission of correlated sources over Shannon's two-way channels (TWCs) are presented. 
A joint source-channel coding theorem for independent sources and TWCs for which adaptation cannot enlarge the capacity region is also established. 
We further investigate the optimality of scalar coding for TWCs with discrete modulo additive noise as well as additive white Gaussian noise.
Comparing the distortion of scalar coding with the derived bounds, we observe that scalar coding achieves the minimum distortion over both families of TWCs for independent and uniformly distributed sources and independent Gaussian sources.
\end{abstract}

\begin{IEEEkeywords}
Two-way channels, lossy joint source-channel coding, source-channel separation, uncoded transmission. 
\end{IEEEkeywords}

\section{Introduction}
Two-way channels (TWCs) were first introduced by Shannon in \cite{Shannon61}. 
Transmission over such channels makes the best use of channel resources since two users can exchange their source data on the same frequency band simultaneously. 
Also, as the channel inputs of both users can be generated by interactively adapting to the past received signals, the receivers' knowledge about the transmitted source can be refined, which may increase the rate of successful data recovery. 
Apart from the (asymptotically) lossless transmission considered by Shannon, it is also natural to investigate the lossy counterpart in which data reconstruction is allowed within a tolerable distortion. 
Such coding schemes may play an important role in improving the efficiency of data transmission over resource-limited networks. 
In this paper, we investigate the performance of lossy transmission over noisy TWCs.

In the literature, TWCs have been studied from different points of view. 
For lossless transmission, TWCs are either viewed as a part of the multiple access channels (MACs) with feedback \cite{Ong06} or related to the compound MACs with correlated side information at the receiver \cite{Gunduz09}.
Inner and outer bounds for the transmission rate over noisy TWCs were derived based on these channel models. 
Furthermore, from the channel capacity perspective, it has been found that adaptation is not always useful \cite{Han84}-\cite{Lin16}. 
In contrast to lossless transmission, studies regarding the lossy counterpart are limited.
The first lossy transmission problem over error-free TWCs appeared in \cite{Kaspi85}, in which only one user can use the channel at each time instant. 
Interactive source coding for noiseless TWCs was considered to establish a rate distortion (RD) region of TWC.
In \cite{Maor06}, these results were extended to noisy TWCs by also adopting the interactive protocol of \cite{Kaspi85}. 
To date, the performance of lossy transmission under Shannon's set-up of simultaneous user transmissions is not fully known.  

In the first part of this paper, we establish achievability and converse theorems for the lossy transmission of two correlated sources over TWCs under Shannon's scenario. 
A TWC is viewed as two one-way channels with associated states, and two-dimensional distortion regions for the TWC are derived. 
For independent sources and for TWCs whose capacity region are not enlarged by adaptation coding, we further find that the achievability and converse parts are matched, resulting in a complete joint source-channel coding theorem that shows that the TWC system can be treated as two parallel one-way systems with separate source and channel coding.  
Based on these results, we investigate the performance of scalar coding for two important classes of additive-noise TWCs: $q$-ary discrete additive-noise TWCs \cite{Lin16} and additive white Gaussian noise (AWGN) TWCs \cite{Han84}. 
For these channels, adaptation coding does not enlarge the capacity region. 
Scalar coding (also known as single-letter coding or uncoded transmission \cite{Gastpar03}) is  particularly interesting because it is the simplest one among all possible transmission schemes. 
We analyze the distortion incurred by the scalar coding scheme and compare it with the distortion lower bounds obtained from the converse theorem.
As expected, it is observed that scalar coding is sub-optimal with a performance deteriorating as the correlation between the two sources increases.
However, when the two sources are independent, we show that scalar coding is optimal for uniform sources under the Hamming distortion measure over the discrete additive TWC and for Gaussian sources under the squared error distortion measure over the AWGN-TWC.
These results are extensions of their well-known counterparts for one-way point-to-point systems (e.g., see \cite{Gastpar03} and references therein). 

The paper is organized as follows. 
In Section~II, the system model of TWCs is introduced. 
Achievability and converse results on lossy transmission over TWCs are also presented. 
In Section~III, the performance of lossy transmission over TWCs with discrete additive noise is investigated.
Analogous results are obtained for the AWGN-TWC system in Section~IV.  
Finally, concluding remarks are given in Section~V. 

\section{Lossy Transmission over Two-way Channels}
\subsection{System Model}
For $i=1, 2$, let $U_i$, $X_i$, and $Y_i$ denote random variables corresponding to terminal~$i$'s source, channel input, and channel output, respectively. 
Let $\mathcal{U}_i$, $\mathcal{X}_i$, and $\mathcal{Y}_i$ respectively denote their alphabets. 
We consider a correlated source transmission problem over Shannon's TWCs as shown in Fig.~\ref{TWC}, where two terminals want to exchange the correlated sources $U_1$ and $U_2$ within desired distortions via a memoryless and noisy TWC governed by the channel input-output transition probability $p(y_1, y_2|x_1, x_2)$. 
One special feature of such transmission is that channel inputs can be generated by adapting to the previously received signals, which may improve the quality of reconstruction. 
Let $R_i$ denote the channel coding rate of terminal~$i$ for $i=1, 2$. 
In \cite{Shannon61}, Shannon derived inner and outer bounds for the capacity region of the TWC.
Both bounds are of the same form but with different input distribution restrictions.
In particular, the bounds are of the form: 
\begin{equation}
\left\{
\begin{array}{c}
R_1\le I(X_1; Y_2|X_2),\\
R_2\le I(X_2; Y_1|X_1),
\end{array}
\right.
\label{capacity}
\end{equation}
where $I(X_i; Y_j|X_j)$ denotes conditional mutual information, and $X_1$ and $X_2$ are independent inputs in the inner bound,  while in the outer bound $X_1$ and $X_2$ are arbitrarily correlated.
Note that the inner bound is proved by a standard coding scheme which does not use adaptation. 
A channel symmetry condition for which the two regions coincide was also provided in \cite[Sec. 12]{Shannon61}. 
This result shows that adaptation coding cannot enlarge the capacity region of symmetric TWCs. 

\begin{figure}[!tb]
\centering
\includegraphics[draft=false, scale=0.44]{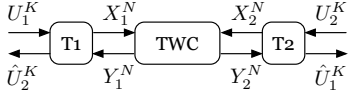} 
\caption{The block diagram of two-way communications.}
\label{TWC}
\end{figure}

For a positive integer blocklength $K$, let $U_i^{K}\triangleq(U_{i1}, U_{i2}, \dots, \allowbreak U_{iK})$ denote the source sequence of terminal~$i$, $i=1, 2$. 
Here, $U_1^K$ and $U_2^K$ are distributed according to the product probability distribution $\prod_{m=1}^K p_{\text{s}}(u_{1m}, u_{2m})$, where $(u_{1m}, u_{2m})\in\mathcal{U}_1\times\mathcal{U}_2$. 
In other words, the joint source $\{(U_{1m}, U_{2m})\}_{m=1}^K$ is memoryless in time with a joint probability distribution $p_{\text{s}}$ over $\mathcal{U}_1\times\mathcal{U}_2$ at each time instant.
For transmitting $U_{i}^K$, an encoding function $f_{in}:\mathcal{U}_{i}^K\times\mathcal{Y}_i^{n-1}\rightarrow\mathcal{X}_i$ is used by terminal~$i$ to generate the $n$th channel input for $n=1, 2, \dots, N$, where $N$ is a positive integer (note that $N=N_K$ as $N$ is a function of $K$). 
Let $X_i^{N}\triangleq(X_{i1}, X_{i2}, \dots, \allowbreak X_{iN})$ denote the channel input sequence corresponding to $U_{i}^K$, and let $Y_{i}^N\triangleq(Y_{i1}, Y_{i2}, \dots, \allowbreak Y_{iN})$ denote the received sequence at terminal~$i$. 
Specifically, we have $X_{in}=f_{in}(U_i^K, Y_i^{n-1})$ for $n=1, 2, \dots, N$, where $Y_i^{n-1}$ denotes the sequence comprising the first $n-1$ entries of $Y_i^{N}$. 
The joint probability distribution of all random vectors associated with the system is given by $p(u_1^K, u_2^K, x_1^N, x_2^N, y_1^N, y_2^N)= p_{\text{s}}(u_1^K, u_2^K)\allowbreak\prod_{n=1}^N p(x_{1n}, x_{2n}|u_1^K, u_2^K, y_1^{n-1}, y_2^{n-1})p(y_{1n}, y_{2n}|x_{1n}, x_{2n})$. 

To reconstruct the source sequence transmitted from the other terminal, terminal~$i$ uses decoder $g_{i}: \mathcal{U}_{i}^K\times\mathcal{Y}_i^N\rightarrow\mathcal{U}_j^K$ to produce the estimate $\hat{U}_j^K=g_{i}(U_i^K, Y_i^N)$ of $U_j^K$ for $i\neq j$. 
The fidelity of the reconstruction is given by $d(u_i^K, \hat{u}_i^K)\triangleq\allowbreak K^{-1}\sum_{n=1}^K d(u_{in}, \hat{u}_{in})$, where $d(u_{in}, \hat{u}_{in})$ is a single-letter distortion measure. 
The pair $(f_i, g_j)$ for $i\neq j$ constitutes a (one-way) joint source-channel code with rate $r\triangleq K/N$ (in source symbols/channel symbol), where $f_i\triangleq (f_{i1}, f_{i2}, \dots, f_{iN})$.
The associated expected distortion is given by $D_{f_i}\triangleq\mathbb{E}[d(U_i^K, \hat{U}_i^K)]=K^{-1}\sum_{n=1}^K E[d(U_{1n}, \hat{U}_{1n})]$. 
The code pairs $(f_1, g_2)$ and $(f_2, g_1)$ then form an overall coding scheme for the two-way source-channel (TWSC) system. 

\begin{definition}
A distortion pair $(D_1, D_2)$ is achievable at rate $r$ for the TWSC system if there exists a sequence of encoding and decoding functions with $\lim_{K\rightarrow\infty} K/N_{K} = r$ such that $\limsup_{K\rightarrow\infty}\mathbb{E}[d(U_i^K, \hat{U}_i^K)]\le D_i$ for $i=1, 2$. 
\end{definition}

\begin{definition}
The distortion region of a rate-$r$ TWSC system is defined as the convex closure of the set of all achievable distortion pairs. 
\end{definition}


\subsection{Necessary and Sufficient Lossy Transmission Conditions}
In \cite[Section VIII]{Gunduz09}, achievability and converse theorems for (asymptotically) lossless transmission of correlated sources in the sense that $\lim_{K\rightarrow\infty} p(U^K \neq \hat{U}^K)=0$ are derived. 
Here, we present a similar result for lossy transmission in which nonzero expected single-letter distortion is considered.
We take Shannon's viewpoint: a TWC can be viewed as two one-way channels with state variables and these states are known at the receiver but not at the transmitter \cite{Shannon61}. 
For transmitting over such a one-way channel, it is natural to treat the correlated source at the receiver as side information for the source at the transmitter as in the setting of the Wyner-Ziv coding problem \cite{WZ76}. 
Following this perspective, let $R^{(i)}(D)$ and $R^{(i)}_{\text{WZ}}(D)$ denote the standard and Wyner-Ziv RD functions of $U_i$ for $i=1, 2$, which are respectively given by
(e.g., \cite{TC91})
\[
R^{(i)}(D)=\min\limits_{p(\hat{u}_i|u_i): \mathbb{E}[d(U_i, \hat{U}_i)]\le D} I(U_i; \hat{U}_i)
\]
and
\[
R^{(i)}_{\text{WZ}}(D)=\min\limits_{p(w|u_i)}\min\limits_{{g_j: \mathcal{U}_j\times\mathcal{W}\rightarrow\mathcal{U}_i}\atop{\mathbb{E}[d(U_i, g_j(U_j, W))]\le D}} I(U_i; W|U_j)
\]
where $j\neq i$ and $W$ is an auxiliary random variable with alphabet $\mathcal{W}$ such that $|\mathcal{W}|\le |\mathcal{U}_i|+1$. 
Below, we establish inner and outer bounds on the distortion region of our lossy transmission system.

\begin{lemma}[\textbf{Achievability}]
For the rate-$r$ lossy transmission of the joint source $(U_1, U_2)$ over the memoryless TWC, the distortion pair $(D_1, D_2)$ is achievable if 
\begin{equation}
\left\{ 
\begin{IEEEeqnarraybox}[][c]{l}
\IEEEyesnumber
r\cdot R_{\text{WZ}}^{(1)}(D_1)< I(X_1; Y_2|X_2),\\
r\cdot R_{\text{WZ}}^{(2)}(D_2)< I(X_2; Y_1|X_1),\label{Ashe}
\end{IEEEeqnarraybox}
\right.
\end{equation}
for some joint probability distribution $p(u_1, u_2, x_1, x_2)=p_{\text{s}}(u_1, u_2)p(x_1)p(x_2)$ of $(U_1, U_2, X_1, X_2)$. 
\label{Ash}
\end{lemma}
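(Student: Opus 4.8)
The plan is to establish the conditions in \eqref{Ashe} by a \emph{separate source-channel coding} argument: compress each terminal's source with a Wyner--Ziv code that treats the other terminal's source as decoder side information (exactly the setting of $R^{(i)}_{\text{WZ}}$ defined above), and transport the two resulting indices with a \emph{non-adaptive} channel code achieving Shannon's inner bound \eqref{capacity}. The structural fact that makes this work is that when both terminals use fixed encoders with mutually independent inputs $X_1\sim p(x_1)$ and $X_2\sim p(x_2)$, the TWC decomposes into two memoryless one-way channels, $X_1\to Y_2$ and $X_2\to Y_1$, in which each terminal's own transmitted codeword plays the role of a state sequence known at the corresponding receiver (terminal~$2$ knows $X_2^N$ and terminal~$1$ knows $X_1^N$, since neither terminal adapts and each generates its own codeword). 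For a random i.i.d.\ $p(x_1)$ codebook, joint-typicality decoding of the transmitted $X_1^N$ against $(X_2^N,Y_2^N)$ then succeeds whenever the rate is below $I(X_1;Y_2,X_2)=I(X_1;Y_2|X_2)$, where the equality uses $X_1\perp X_2$; symmetrically, the second sub-channel supports rate $I(X_2;Y_1|X_1)$.

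First I would fix $\epsilon>0$ small enough that $r\,(R^{(i)}_{\text{WZ}}(D_i)+\epsilon)<I(X_i;Y_j|X_j)$ for $i\neq j$, which is possible because the inequalities in \eqref{Ashe} are strict. For the source layer, invoke the Wyner--Ziv theorem: for all large $K$ there is a block code mapping $U_1^K$ to an index $W_1\in\{1,\dots,\lceil 2^{K(R^{(1)}_{\text{WZ}}(D_1)+\epsilon)}\rceil\}$ together with a decoder that, from $(W_1,U_2^K)$, produces $\hat U_1^K$ with $\mathbb{E}[d(U_1^K,\hat U_1^K)]\le D_1+\epsilon$; and symmetrically a code with index $W_2$ and decoder output $\hat U_2^K$ satisfying $\mathbb{E}[d(U_2^K,\hat U_2^K)]\le D_2+\epsilon$. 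Single-letterization here is the standard one for a memoryless source observed jointly with i.i.d.\ decoder side information.

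Next, for the channel layer, generate terminal~$i$'s codebook with $2^{N R_i}$ codewords drawn i.i.d.\ $\sim\prod_n p(x_{in})$, where $R_i \triangleq r\,(R^{(i)}_{\text{WZ}}(D_i)+\epsilon)$, so that a length-$N$ block carries the entire index $W_i$; the two codebooks are independent of each other and of the sources, hence $X_i^N$ is independent of $(U_j^K,X_j^N)$ given $W_i$. Terminal~$i$ transmits the codeword indexed by $W_i$ with no adaptation, terminal~$j\neq i$ decodes $\hat W_i$ from $(X_j^N,Y_j^N)$ by joint typicality, and then feeds $(\hat W_i,U_j^K)$ to the Wyner--Ziv decoder to obtain $\hat U_i^K$. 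By the choice of $R_i$ and the sub-channel argument above, the codebook-averaged error satisfies $\Pr(\hat W_1\neq W_1)\to 0$ and $\Pr(\hat W_2\neq W_2)\to 0$ as $K\to\infty$; a union bound over the two directions and the usual expurgation then yield a deterministic code pair with both error probabilities vanishing. Splitting the distortion on the events $\{\hat W_i=W_i\}$ and its complement gives $\mathbb{E}[d(U_i^K,\hat U_i^K)]\le (D_i+\epsilon)+d_{\max}\,\Pr(\hat W_i\neq W_i)$, where $d_{\max}$ bounds the distortion measure (assumed bounded; the unbounded Gaussian case is treated separately in Section~IV). Hence $\limsup_{K\to\infty}\mathbb{E}[d(U_i^K,\hat U_i^K)]\le D_i+\epsilon$, and a routine diagonalization over $\epsilon\downarrow 0$ gives achievability of $(D_1,D_2)$ in the sense of Definition~1.

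I expect the only genuine obstacle to be the care needed to decouple the two directions: one must confirm that with non-adaptive, mutually independent inputs each receiver's own channel input is a legitimate receiver-known state (so the relevant sub-channel rate is $I(X_i;Y_j|X_j)$ and not $I(X_i;Y_j)$), and that operating both channel codes simultaneously over the shared TWC does not couple the two error analyses. Once this decomposition is set up, the rest is an assembly of the classical Wyner--Ziv source coding theorem, the channel coding theorem for a state-informed receiver, and a bounded-distortion averaging bound.
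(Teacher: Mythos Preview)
Your proposal is correct and follows essentially the same separation-based approach as the paper's sketch: Wyner--Ziv compression of each source with the other source as decoder side information, followed by non-adaptive channel coding over the TWC via Shannon's inner bound \eqref{capacity}. You supply considerably more detail than the paper (the state-known-at-receiver decomposition, the union-bound/expurgation step, and the bounded-distortion averaging), but the underlying scheme and the key rate comparison $r\,R^{(i)}_{\text{WZ}}(D_i)<I(X_i;Y_j|X_j)$ are identical.
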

\begin{proof}[Sketch of the Proof]
The idea is to exhibit a coding scheme which integrates two simultaneous separate source-channel codes, where each code in one direction of the TWSC system achieves $D_i$, $i=1, 2$. 
For given $\epsilon_i>0$, we first apply Wyner-Ziv coding with rate $R_{\text{S}, i}\triangleq K(R^{(i)}_{\text{WZ}}(D_i/(1+\epsilon_i)))$ to compress $U_{i}^K$. 
The index of the resulting codeword is then encoded by a channel code with bloklength $N_K$ for point-to-point channels without feedback.
From the capacity region inner bound in (\ref{capacity}), we know that for $i\neq j$ the index can be reliably transmitted from terminal~$i$ to $j$ if $\lim_{K \to \infty} R_{S,i}/N_K < I(X_i:Y_j|X_j)$, which implies the desired results.
\end{proof}

\begin{lemma}[\textbf{Converse}]
For the rate-$r$ lossy transmission of joint source $(U_1, U_2)$ over the memoryless TWC, if $(D_1, D_2)$ is achievable, then 
\begin{equation}
\left\{ 
\begin{IEEEeqnarraybox}[][c]{l}
\IEEEyesnumber\label{ob}
R^{(1)}(D_1)\le  I(U_1; U_2) + r^{-1}\cdot I(X_1; Y_2|X_2),\\ 
R^{(2)}(D_2)\le  I(U_1; U_2) + r^{-1}\cdot I(X_2; Y_1|X_1),
\end{IEEEeqnarraybox}
\right.
\end{equation}
for some joint probability distribution $p(u_1, u_2, x_1, x_2)=p_{\text{s}}(u_1, u_2)p(x_1, x_2)$ of $(U_1, U_2, X_1, X_2)$. 
\label{lemma3}
\end{lemma}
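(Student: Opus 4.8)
The plan is to run a Shannon‑type single‑letterization converse separately in each direction of the TWC and then stitch the two one‑shot bounds together through a common time‑averaged input distribution. Fix a code from the achievability sequence, with source blocklength $K$ and channel blocklength $N=N_K$, and consider the direction from terminal~1 to terminal~2; write $D_{f_1}=\mathbb{E}[d(U_1^K,\hat U_1^K)]$ with $\hat U_1^K=g_2(U_2^K,Y_2^N)$. First I would produce a rate--distortion lower bound: since the source pairs are i.i.d.\ and $(U_{1m},\hat U_{1m})$ is a feasible test channel for $R^{(1)}$, the definition of $R^{(1)}$, its convexity (Jensen's inequality), and the fact that $I(U_1^K;\hat U_1^K)\ge\sum_m I(U_{1m};\hat U_{1m})$ (valid because the $U_{1m}$ are independent) give $K\,R^{(1)}(D_{f_1})\le I(U_1^K;\hat U_1^K)$. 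Next, since $\hat U_1^K$ is a deterministic function of $(U_2^K,Y_2^N)$, the data‑processing inequality and memorylessness of the joint source yield
\[
I(U_1^K;\hat U_1^K)\le I(U_1^K;U_2^K,Y_2^N)=K\,I(U_1;U_2)+I(U_1^K;Y_2^N|U_2^K).
\]

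The core step is to bound $I(U_1^K;Y_2^N|U_2^K)$ by a sum of per‑symbol channel terms. Expanding by the chain rule into $\sum_{n=1}^N I(U_1^K;Y_{2n}|U_2^K,Y_2^{n-1})$, I would treat the $n$th term as follows: because $X_{2n}=f_{2n}(U_2^K,Y_2^{n-1})$ is a deterministic function of the conditioning variables, $H(Y_{2n}|U_2^K,Y_2^{n-1})\le H(Y_{2n}|X_{2n})$; and because $(X_{1n},X_{2n})$ are deterministic functions of $(U_1^K,U_2^K,Y_1^{n-1},Y_2^{n-1})$ while the memoryless‑channel factorization makes $Y_{2n}$ conditionally independent of everything up to time~$n$ given $(X_{1n},X_{2n})$, one has $H(Y_{2n}|U_1^K,U_2^K,Y_2^{n-1})\ge H(Y_{2n}|U_1^K,U_2^K,Y_1^{n-1},Y_2^{n-1})=H(Y_{2n}|X_{1n},X_{2n})$. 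Hence $I(U_1^K;Y_2^N|U_2^K)\le\sum_{n=1}^N I(X_{1n};Y_{2n}|X_{2n})$. To single‑letterize, introduce a time‑sharing variable $Q$ uniform on $\{1,\dots,N\}$ and independent of all system variables, and set $X_1=X_{1Q}$, $X_2=X_{2Q}$, $Y_2=Y_{2Q}$; time‑invariance and memorylessness make $Q\to(X_1,X_2)\to Y_2$ a Markov chain, so $Y_2$ given $(X_1,X_2)$ obeys the channel law and $\frac1N\sum_n I(X_{1n};Y_{2n}|X_{2n})=I(X_1;Y_2|X_2,Q)\le I(X_1;Y_2|X_2)$ under the time‑averaged input law $\bar p(x_1,x_2)=\frac1N\sum_n p_{X_{1n},X_{2n}}(x_1,x_2)$. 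Combining and dividing by $K$ (with $N/K=r_K^{-1}$, $r_K=K/N$) gives $R^{(1)}(D_{f_1})\le I(U_1;U_2)+r_K^{-1}I_{\bar p}(X_1;Y_2|X_2)$.

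Repeating this argument with the roles of the two terminals interchanged — and, crucially, with the \emph{same} time‑sharing variable $Q$, hence the same averaged law $\bar p$ — yields $R^{(2)}(D_{f_2})\le I(U_1;U_2)+r_K^{-1}I_{\bar p}(X_2;Y_1|X_1)$. Finally I would pass to the limit: along a subsequence of $K$ on which $\bar p\to\bar p^{\ast}$ (compactness of the space of input distributions, with a standard truncation/limiting argument if the channel alphabets are continuous) and $r_K\to r$, using $\limsup_K D_{f_i}\le D_i$ together with the monotonicity and right‑continuity of $R^{(i)}$, both bounds pass to $R^{(i)}(D_i)\le I(U_1;U_2)+r^{-1}I_{\bar p^{\ast}}(X_i;Y_j|X_j)$, which is the claim with $p(u_1,u_2,x_1,x_2)=p_{\text{s}}(u_1,u_2)\bar p^{\ast}(x_1,x_2)$. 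I expect the main obstacle to be the third step: inserting the channel inputs through their deterministic dependence on the causal information and correctly invoking the memoryless‑channel factorization to drop the conditioning of $Y_{2n}$ on the past — a place where sign/direction errors are easy — with a secondary subtlety being the need for one input distribution to serve both directions at once, which is exactly why a single time‑sharing variable and a single convergent subsequence are used.
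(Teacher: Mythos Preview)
Your proposal is correct and follows essentially the same route as the paper's proof: the rate--distortion lower bound, the data-processing split into $K\,I(U_1;U_2)+I(U_1^K;Y_2^N|U_2^K)$, the chain-rule expansion with the insertion of $X_{2n}$ and the Markov step $H(Y_{2n}\mid U_1^K,U_2^K,Y_1^{n-1},Y_2^{n-1})=H(Y_{2n}\mid X_{1n},X_{2n})$, and the use of one common time-averaged input law for both directions are all identical. The only cosmetic differences are that you single-letterize via a time-sharing variable $Q$ whereas the paper invokes Shannon's concavity of $I(X_i;Y_j\mid X_j)$ in $p(x_1,x_2)$ directly (the same inequality), and you are somewhat more explicit than the paper about extracting a convergent subsequence of $\bar p$ when taking the limit in $K$.
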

\begin{proof}
Suppose that there exists a source-channel coding scheme $(f_1, f_2, g_2, g_2)$ for the TWSC system with rate $r$ and average distortions $\limsup_{K \to \infty} D_{f_i} \leq D_i$, $i=1,2$. 
From terminal~$1$ to $2$, we have 
{\allowdisplaybreaks
\begin{IEEEeqnarray}{rCl}
& &K\cdot R^{(1)}(D_{f_1})\nonumber\\
&= &K\cdot R^{(1)}\left(K^{-1}\sum\limits_{n=1}^K \mathbb{E}\left[d(U_{1n}, \hat{U}_{1n})\right]\right)\nonumber\\
&\le & K\sum\limits_{n=1}^K K^{-1} R^{(1)}\left(\mathbb{E}[d(U_{1n}, \hat{U}_{1n})]\right)\label{c-3}\\
&\le &\sum\limits_{n=1}^K I(U_{1n}; \hat{U}_{1n})\label{c-5}\\
&\le &I(U_1^K; \hat{U}_1^K)\label{c-6}\\
&\le &I(U_1^K; U_2^K, Y_2^N)\label{c-8}\\
&= &I_{\rho} + I(U_1^K; Y_2^N|U_2^K)\label{c-9}\\
&= &I_{\rho} + \sum\limits_{n=1}^NI(U_1^K; Y_{2n}|U_2^K, Y_2^{n-1})\label{c-10}\\
&= &\scalemath{0.95}{I_{\rho} + \sum\limits_{n=1}^NH(Y_{2n}|U_2^K, Y_2^{n-1})-H(Y_{2n}|U_2^K, U_1^K, Y_2^{n-1})}\nonumber\IEEEeqnarraynumspace\\
&\le &I_{\rho} + \scalemath{0.95}{\sum\limits_{n=1}^NH(Y_{2n}|X_{2n})-H(Y_{2n}|U_2^K, U_1^K, Y_2^{n-1}, Y_1^{n-1})}\label{c-12}\IEEEeqnarraynumspace\\
&= & I_{\rho} + \sum\limits_{n=1}^NH(Y_{2n}|X_{2n})-H(Y_{2n}|X_{1n}, X_{2n})\label{c-13}\\
&= & I_{\rho} + \sum\limits_{n=1}^N I(X_{1n}; Y_{2n}|X_{2n})\nonumber\\
&= & I_{\rho} + N\sum\limits_{n=1}^N N^{-1}I(X_{1n}; Y_{2n}|X_{2n})\nonumber
\end{IEEEeqnarray}
where $I_{\rho} \triangleq I(U_1^K; U_2^K)=K\cdot I(U_1; U_2)$, (\ref{c-3}) is from the fact that $R^{(i)}(D_i)$ is convex, (\ref{c-5}) follows the definition of the RD function, (\ref{c-6}) is obtained by the independence of the $U_{1n}$'s and the fact that conditioning reduces entropy, (\ref{c-8}) is from the data processing inequality, (\ref{c-9}) and (\ref{c-10}) follow from the chain rule for mutual information, (\ref{c-12}) holds since $X_{2n}=f_{2n}(U_2^K, Y_2^{n-1})$ and conditioning reduces entropy, and (\ref{c-13}) is due to the Markov chain $(U_1^K, U_2^K, Y_{1}^{n-1}, Y_{2}^{n-1})-(X_{1n}, X_{2n})-(Y_{1n}, Y_{2n})$.  
In the same way, we also obtain
\begin{equation}
K\cdot R^{(2)}(D_{f_2})\le I_{\rho} + N\sum\limits_{n=1}^N N^{-1}I(X_{2n}; Y_{1n}|X_{1n})\nonumber\label{c-16}
\end{equation}
for the direction from terminal~$2$ to $1$. Since both $I(X_{1n}; Y_{2n}|X_{2n})$ and $I(X_{2n}; Y_{1n}|X_{1n})$ are known to be concave in $p(x_{1n}, x_{2n})$ \cite{Shannon61}, forming the mixture input distribution $p(x_1, x_2)\triangleq\allowbreak N^{-1}\sum_{n=1}^N p(x_{1n}, p_{2n})$ immediately results in $\sum_{n=1}^N N^{-1}I(X_{1n}; Y_{2n}|X_{2n})\le I(X_1; Y_2|X_2)$ and $\sum_{n=1}^N N^{-1}I(X_{2n}; Y_{1n}|X_{1n})\le I(X_2; Y_1|X_1)$.
Finally noting that $\limsup_{K \to \infty} D_{f_i} \leq D_i$, $i=1,2$ and that $R^{(i)}(D)$ is non-increasing and continuous, the proof is completed by taking the limit with respect to $K$.} 
\end{proof}

Although the distortion regions given by Lemmas \ref{Ash} and \ref{lemma3} do not match, we find that they coincide in some situations.

\begin{theorem}[\textbf{A Joint Source-Channel Coding Theorem}]
For the rate-$r$ transmission of independent sources $U_1$ and $U_2$ over a memoryless TWC for which adaptation does not enlarge the capacity region (i.e., for which the capacity region inner bound in (\ref{capacity}) is tight), a distortion pair $(D_1, D_2)$ is achievable if and only if  
\begin{equation}
\left\{ 
\begin{IEEEeqnarraybox}[][c]{l}
r\cdot R^{(1)}(D_1)\le I(X_1; Y_2|X_2),\nonumber\\
r\cdot R^{(2)}(D_2)\le I(X_2; Y_1|X_1),\nonumber
\end{IEEEeqnarraybox}
\right.
\end{equation}
for some $p(x_1, x_2)=p(x_1)p(x_2)$. 
\label{JSC}
\end{theorem}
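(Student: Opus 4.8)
The plan is to read the theorem off Lemmas~\ref{Ash} and~\ref{lemma3}: specialize each lemma to independent sources, and then, for the converse only, invoke the hypothesis that adaptation does not enlarge the capacity region. No new coding construction or converse computation is needed; the content lies in three small reductions.

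For the achievability ("if") direction, the first observation is that $R^{(i)}_{\text{WZ}}(\cdot)=R^{(i)}(\cdot)$ when $U_1$ and $U_2$ are independent. Indeed, for any $W$ with the Markov chain $W-U_i-U_j$ and $U_i$ independent of $U_j$ one gets $U_j$ independent of $(U_i,W)$, hence $I(U_i;W\mid U_j)=I(U_i;W)$; moreover in the inner minimization the decoder $g_j(U_j,W)$ can be taken to ignore $U_j$, since averaging over the (useless) $U_j$ cannot lower the expected distortion, so that minimization collapses to the one defining $R^{(i)}(D)$. Substituting this into Lemma~\ref{Ash} shows $(D_1,D_2)$ is achievable whenever $r\,R^{(i)}(D_i)<I(X_i;Y_j\mid X_j)$ for some $p(x_1)p(x_2)$. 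To weaken these strict inequalities to the non-strict ones in the theorem I would use that each $R^{(i)}(\cdot)$ is continuous and non-increasing (and strictly decreasing wherever positive) and that the distortion region is closed: for every $\epsilon>0$ the pair $(D_1+\epsilon,D_2+\epsilon)$ satisfies the strict inequalities (the degenerate case $R^{(i)}(D_i)=0$ being handled by a constant reconstruction), hence is achievable, and letting $\epsilon\downarrow 0$ finishes this direction.

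For the converse ("only if") direction, I would apply Lemma~\ref{lemma3} with $I(U_1;U_2)=0$, obtaining a \emph{possibly correlated} input distribution $p(x_1,x_2)$ with $r\,R^{(1)}(D_1)\le I(X_1;Y_2\mid X_2)$ and $r\,R^{(2)}(D_2)\le I(X_2;Y_1\mid X_1)$. Equivalently, the rate pair $\bigl(r\,R^{(1)}(D_1),\,r\,R^{(2)}(D_2)\bigr)$ lies in Shannon's \emph{outer} bound region of~\eqref{capacity}. This is the only place the channel hypothesis enters: for the TWCs under consideration the inner bound of~\eqref{capacity} is tight, i.e.\ the inner and outer bound regions of~\eqref{capacity} coincide (this is precisely the statement that adaptation does not enlarge the capacity region; cf.\ Shannon's symmetry condition and the additive-noise channels treated in Sections~III--IV), so the same rate pair already lies in the \emph{inner} bound region, furnishing a product distribution $p(x_1)p(x_2)$ with $r\,R^{(i)}(D_i)\le I(X_i;Y_j\mid X_j)$. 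Compactness of the set of admissible input distributions (which also holds under the Gaussian power constraint) makes the relevant suprema attained, so "for some $p(x_1)p(x_2)$" holds literally rather than only in a limiting sense.

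The two specializations, $R^{(i)}_{\text{WZ}}=R^{(i)}$ and $I(U_1;U_2)=0$, are routine, and the $\epsilon$-perturbation in the achievability part is a standard (if slightly fiddly) boundary argument. The \textbf{main obstacle}, and the crux of the theorem, is the converse step of replacing the correlated input distribution supplied by Lemma~\ref{lemma3} with an independent one: it works exactly because the hypothesis forces the source-channel converse region, which a priori sits inside the outer bound of~\eqref{capacity}, to collapse onto the single-letter independent-input inner-bound region, thereby matching the achievability region and yielding the stated single-letter characterization.
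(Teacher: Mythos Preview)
Your proposal is correct and follows essentially the same approach as the paper's own proof: both simply specialize Lemmas~\ref{Ash} and~\ref{lemma3} via $R^{(i)}_{\text{WZ}}=R^{(i)}$ and $I(U_1;U_2)=0$, and then invoke the tightness of the inner bound in~\eqref{capacity} to pass from a correlated to a product input distribution in the converse. You are in fact more careful than the paper in two places --- you justify the strict/non-strict inequality passage in the achievability direction, and you phrase the converse reduction precisely as the coincidence of the inner and outer bound \emph{regions} rather than the paper's more informal ``the optimal input distribution in the outer bound is product form'' --- but the skeleton is identical.
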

\begin{proof}
If $U_1^K$ and $U_2^K$ are independent, then $I(U_1; U_2)=0$ in (\ref{ob}) and $R^{(i)}_{\text{WZ}}(D_i)=R^{(i)}(D_i)$ for $i=1, 2$.
Moreover, for TWCs where adaptation coding does not enlarge the capacity region, the optimal channel input distribution in the outer bound of (\ref{capacity}) is product form $p(x_1, x_2)=p(x_1)p(x_2)$. 
The rest of the proof follows from Lemmas~\ref{Ash} and \ref{lemma3}. 
\end{proof}

The achievability results of Lemma~\ref{Ash} and in Theorem~\ref{JSC} require the use of coding schemes with long codeword lengths. 
This requirement usually cannot be met in practice. 
Among all possible transmission schemes, the scalar coding (or uncoded) scheme, in which encoding and decoding are performed symbol-by-symbol and thus at a rate $r=1$, is the simplest one.
A scalar coding scheme which is optimal, i.e., it achieves the distortion limit of Lemma~\ref{lemma3} or Theorem~\ref{JSC}, is naturally appealing. 
In the next two sections, we investigate the performance of scalar coding over two important TWCs. 
To our knowledge, such results have not been reported.  
For the sake of brevity, we only present the results for one direction of transmission, i.e., from terminal~$1$ to $2$. 
The result for the reverse direction can be similarly derived. 
We close this section with the definition of scalar coding. 

\begin{definition}[\textbf{Scalar Coding}]
Set $N=K$. Scalar coding is a transmission scheme such that $X_{in}=h_{i}(U_{in})$ for some fixed function $h_{i}: \mathcal{U}_i\rightarrow\mathcal{X}_i$ and $\hat{U}_{in}=g_{j}(U_{jn}, Y_{jn})$, for $i\neq j$ and $n=1, 2, \cdots, N$.  
\end{definition}

\section{Lossy Transmission for TWCs with Discrete Additive Noise}
A TWC with $q$-ary modulo additive noise is defined as
\begin{equation}
\left\{ 
\begin{IEEEeqnarraybox}[][c]{l}
\IEEEyesnumber
Y_{1n}=X_{1n}\oplus X_{2n}\oplus Z_{1n}\nonumber\\
Y_{2n}=X_{1n}\oplus X_{2n}\oplus Z_{2n},\nonumber
\end{IEEEeqnarraybox}
\right.
\end{equation}
where $X_{1n}$, $X_{2n}$, $Z_{1n}$, $Z_{2n}\in\{0, 1, \dots, q-1\}$, $\oplus$ denotes the modulo-$q$ addition, and $\{Z_{1n}\}$ and $\{Z_{2n}\}$ are memoryless noise processes which are independent of each other and of the correlated sources. 
For $i=1, 2$ and $n=1, 2, \cdots, N$, we further assume that $\Pr(Z_{in}=0)=1-\epsilon_i$ and $\Pr(Z_{in}=j)=\epsilon_i/(q-1)$ for $j=1, 2, \dots, q-1$, where $0\le\epsilon_i \le (q-1)/q$.  
We begin by investigating the performance of scalar coding for an important class of binary correlated sources. 

\begin{subsubsection}{Correlated Binary Sources ($q=2$)}\label{3b}
Consider a joint binary source whose marginal probability distributions are uniform such that the individual sources are respectively modeled as the input and output of a binary symmetric channel with crossover probability $\delta\in[0, 1/2]$. 
For this joint source, the correlation coefficient is $\rho= 1-2\delta$ and the associated $R^{(i)}(D)$ under the Hamming distortion measure is given by \cite{TC91}
\begin{IEEEeqnarray}{l}
R^{(i)}(D) = 
\scalemath{0.96}{\left\{
\begin{array}{ll}
1 - H_{\text{b}}(D), & 0\le D\le 1/2,\\
0, & D>1/2,\\
\end{array}\right.}
\IEEEeqnarraynumspace\label{WZ2}
\end{IEEEeqnarray}
where $H_{\text{b}}(\cdot)$ is the binary entropy function.

From Lemma~\ref{lemma3}, it is known that any $r=1$ source-channel coding scheme achieving distortion pair $(D_1, D_2)$ must satisfy (\ref{ob}). Thus,   
\begin{IEEEeqnarray}{rCl}
R^{(1)}(D_1)&\le& I(U_1; U_2) + I(X_1; Y_2|X_2)\nonumber\\
&\le&1-H_{\text{b}}(\delta)+(H(Y_2)- H(Y_2|X_1, X_2))\label{x1}\IEEEeqnarraynumspace\\
&\le& 2-H_{\text{b}}(\delta)-H_{\text{b}}(\epsilon_2)\label{x2}
\end{IEEEeqnarray}
where (\ref{x1}) holds since $H(Y_2|X_2)\le H(Y_2)$ and (\ref{x2}) follows that $H(Y_2)\le 1$ and $H(Y_2|X_1, X_2)=H(Z_2|X_1, X_2)=H(Z_2)=H_{\text{b}}(\epsilon_2)$. 
Similarly, we have $R^{(2)}(D_2)\le 2-H_{\text{b}}(\delta)-H_{\text{b}}(\epsilon_1)$. 
Using (\ref{x2}) and (\ref{WZ2}), lower bounds for the system distortions $D_1$ and $D_2$ can be found numerically for given $\delta$ and $\epsilon_i$'s. 

Now, we consider the scalar coding scheme with $h_i(U_i)=U_i$ so that $X_{in}=U_{in}$ for $i=1, 2$ and $n=1, 2, \dots, N$.
For this encoder, it can be shown that the estimate $\hat{U}_{in}=Y_{jn}$, $i\neq j$, yields the optimum decoding performance, and the average distortions are given by $D_1=\epsilon_2$ and $D_2=\epsilon_1$. 
In Fig.~\ref{GAP2}, we plot the gap between the distortion lower bound and $\epsilon_2$ (for the direction from terminal~$1$ to $2$).
The numerical results show that scalar coding is sub-optimal. 
In particular, as the source correlation $\rho$ increases, the gap becomes larger. 
Also, when the quality of the channel deteriorates, the scalar coding scheme suffers a serious performance degradation. 
Nevertheless, when $U_1$ and $U_2$ are independent, i.e., $\rho=0$, scalar coding becomes optimal (with the gap in Fig.~\ref{GAP2} reducing to zero). 

In fact, this result for independent sources can be derived analytically. 
Since $H_{\text{b}}(\delta)=1$ when $\rho=0$, using (\ref{WZ2}) and (\ref{x2}) immediately gives $D_1\ge \epsilon_2$. 
Similarly, we have $D_2\ge \epsilon_1$. 
Clearly, the scalar coding scheme achieves the lower bounds and is hence optimal. 
We next show that this result also holds for non-binary independent sources.  
 
\begin{figure}[!tb]
\centering
\vspace{-0.2cm}
\includegraphics[draft=false, scale=0.36]{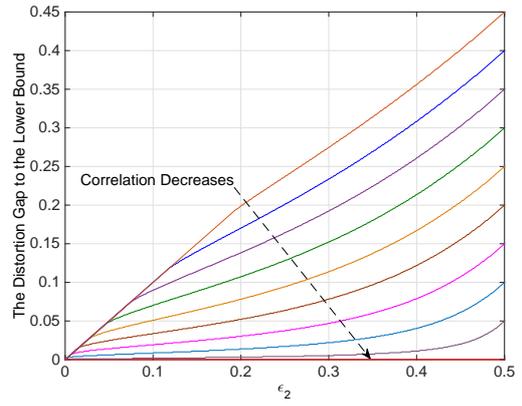} 
\caption{The performance loss of transmitting binary correlated sources via scalar coding. The curves from top to bottom correspond to $\rho$ ranging from $0.9$ to $0$ with a step size of $0.1$.}\vspace{-0.4cm}
\label{GAP2}
\end{figure}

\subsubsection{Independent and Uniformly Distributed $q$-ary Sources}
Suppose that $U_1$ and $U_2$ are independent and uniformly distributed $q$-ary sources, i.e., $\Pr(U_1=j)=\Pr(U_2=j)=1/q$ for $j=0, 1, \dots, q-1$.
In this case, $R^{(i)}(D)$ is given by \cite{Berger71}
\begin{IEEEeqnarray}{l}
\scalemath{0.85}{R^{(i)}(D)= \left\{
\begin{array}{ll}
\log_2 q-H_{\text{b}}(D)-D\log_2(q-1), & 0\le D\le \frac{q-1}{q},\\
0, & D>\frac{q-1}{q}.\\
\end{array}\right.} 
\IEEEeqnarraynumspace
\label{RC}
\end{IEEEeqnarray}
By Lemma~\ref{lemma3} (with $r=1$), one has
\begin{IEEEeqnarray}{rCl}
R^{(1)}(D_1)&\le& I(U_1; U_2) + I(X_1; Y_2|X_2)\nonumber\\
&\le&\log_2 q- H(Z_2)\nonumber\\
&=& \log_2 q - H_{\text{b}}(\epsilon_2)-\epsilon_2\log_2 (q-1)
\label{case2}
\end{IEEEeqnarray}
where the last equation is obtained by evaluating $H(Z_2)$. 
From (\ref{RC}) and (\ref{case2}), we obtain that $D_1\ge \epsilon_2$.
Similarly, we have $D_2\ge \epsilon_1$. 
On the other hand, one can easily show that the distortion achieved by the optimum decoder for scalar coding in this case is $D_1=\epsilon_2$ and $D_2=\epsilon_1$. 
Thus, scalar coding is optimal for this non-binary setting. 
\end{subsubsection}

\section{Lossy Transmission for AWGN-TWCs}
The AWGN-TWC system is described by 
\begin{equation}
\left\{ 
\begin{IEEEeqnarraybox}[][c]{l}
\IEEEyesnumber
Y_{1n} = X_{1n}+X_{2n}+Z_{1n},\nonumber\\
Y_{2n} = X_{1n}+X_{2n}+Z_{2n},\nonumber
\end{IEEEeqnarraybox}
\right.
\end{equation}
where $\{Z_{1n}\}$ and $\{Z_{2n}\}$ are memoryless zero mean Gaussian noise processes with variance $\sigma_1^2$ and $\sigma_2^2$, respectively.
Also, $\{Z_{1n}\}$ and $\{Z_{2n}\}$ are assumed to be independent of each other and of the sources. 
The $X_{in}$'s are additionally required to satisfy the power constraint $\mathbb{E}[\sum_{n=1}^{N}|X_{in}|^2]\le N\cdot P_i$, where $P_i>0$ is the average transmission power of terminal~$i$.  

The correlated sources $U_1$ and $U_2$ are herein considered to be jointly Gaussian with correlation coefficient $\rho$. 
Without loss of generality, $U_1$ and $U_2$ are assumed to have zero mean and unit variance. 
In this case, the RD function under the squared error distortion measure is given by \cite{TC91}
\begin{equation}
R^{(i)}(D) =
\begin{cases}
\frac{1}{2}\log\frac{1}{D} & 0<D\le 1,
\\
0 & D>1,
\end{cases}
\label{RDWZ}
\end{equation}
and $I(U_1; U_2)=-1/2\cdot\log (1-\rho^2)$, where $-1\le\rho\le 1$. 

We next obtain a bound on the performance limit of rate-one transmission over Gaussian TWSC systems. 
Let $\gamma_i\triangleq P_i/\sigma^2_j$ be the signal-to-noise ratio (SNR) for $i\neq j$. 
Combining (\ref{RDWZ}) with (\ref{lemma3}), we obtain the lower bounds $D_1\ge(1-\rho^2)/(1+\gamma_1)$ and $D_2\ge(1-\rho^2)/(1+\gamma_2)$.
Now, consider the scalar coding from terminal~$1$ to $2$ with $h_1$ given by $X_{1n}=h_1(U_{1n})=\alpha U_{1n}$, where $\alpha=\sqrt{P_1}$ is set to satisfy the power constraint.
At the receiver, we employ a minimum mean square error (MMSE) detector to yield the optimum estimate $\hat{U}_{1n}=\sqrt{P_1}/(P_1+\sigma^2_2)(Y_{2n}-X_{2n})$. 
From the numerical results shown in Fig.~\ref{GAP} (about the distortion gap from terminal~$1$ to~$2$), we observe a behavior similar to the discrete system of Fig.~\ref{GAP2}.
In the extreme case of $\rho=0$, i.e., when $U_1$ and $U_2$ are independent, scalar coding achieves the distortion lower bounds for both direction of transmission and is hence optimal. 

Here, for any value of $\rho$, we note that the coding scheme given in the proof of Lemma~\ref{Ash} can be used to achieve the lower bounds. 
We give a more general result in the next lemma and the achievability of the distortion lower bound for this rate-one transmission is simply obtained by setting $r=1$.

\begin{lemma}
For rate-$r$ lossy transmission of jointly Gaussian sources with zero mean, unit variance, and correlation $\rho$, over AWGN-TWCs with SNRs $\gamma_1$ and $\gamma_2$, all distortion values $D_i\ge(1-\rho^2)/(1+\gamma_i)^{1/r}$, $i=1, 2$, are achievable.   
\label{thm5}
\end{lemma}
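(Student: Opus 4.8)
The plan is to derive Lemma~\ref{thm5} directly from the achievability bound of Lemma~\ref{Ash}, specialized to a Gaussian product input distribution and the Wyner--Ziv rate--distortion function of a jointly Gaussian source. First I would fix the channel inputs: take $X_i\sim\mathcal{N}(0,P_i)$, $i=1,2$, independent of each other, so that $p(x_1,x_2)=p(x_1)p(x_2)$ as required in (\ref{Ashe}) and the power constraints are met. Since $Y_j=X_1+X_2+Z_j$ with $Z_j$ independent of $(X_1,X_2)$, conditioning on $X_j$ leaves $Y_j-X_j=X_i+Z_j$, whence \[ I(X_i;Y_j|X_j)=\tfrac12\log\!\Big(1+\tfrac{P_i}{\sigma_j^2}\Big)=\tfrac12\log(1+\gamma_i),\qquad i\neq j. \]

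Next I would bring in the Wyner--Ziv RD function. For jointly Gaussian $(U_i,U_j)$ with zero mean, unit variance, and correlation $\rho$, the conditional law of $U_i$ given $U_j$ is Gaussian with variance $1-\rho^2$, and the Gaussian Wyner--Ziv problem incurs no rate loss relative to the case where the encoder also observes $U_j$; hence \[ R_{\text{WZ}}^{(i)}(D)=\tfrac12\log^{+}\!\frac{1-\rho^2}{D}. \] Substituting this and the mutual information above into (\ref{Ashe}), the $i$th inequality reads $\tfrac{r}{2}\log\frac{1-\rho^2}{D_i}<\tfrac12\log(1+\gamma_i)$ when $D_i<1-\rho^2$ (and holds vacuously when $D_i\ge 1-\rho^2$, since $R_{\text{WZ}}^{(i)}(D_i)=0$ there), which rearranges to $D_i>(1-\rho^2)/(1+\gamma_i)^{1/r}$. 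Because the two conditions in (\ref{Ashe}) involve the disjoint marginals $p(x_1)$ and $p(x_2)$, the single Gaussian product distribution chosen above makes both hold at once, so every pair $(D_1,D_2)$ with $D_i>(1-\rho^2)/(1+\gamma_i)^{1/r}$, $i=1,2$, is achievable.

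To reach the boundary values $D_i=(1-\rho^2)/(1+\gamma_i)^{1/r}$ stated in Lemma~\ref{thm5}, I would invoke that the distortion region is defined as a (convex) closure, so limits of achievable pairs are achievable; equivalently, run the above argument with $D_i$ replaced by $(1+\eta)D_i$ and let $\eta\downarrow 0$, using continuity and monotonicity of $R^{(i)}_{\text{WZ}}$. Setting $r=1$ recovers the bound $D_i\ge(1-\rho^2)/(1+\gamma_i)$ used earlier in the section for the scalar-coding comparison, which is the intended corollary.

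I expect the only two points requiring genuine care to be: (i) that the achievability scheme of Lemma~\ref{Ash}, stated for memoryless TWCs over abstract alphabets, carries over to the continuous-alphabet AWGN-TWC with a per-block average power constraint---this is routine via cost-constrained channel coding (or finely quantizing the Gaussian codebook and letting the quantization vanish), since the inner bound (\ref{capacity}) holds for Gaussian inputs without adaptation; and (ii) correctly citing/using the no-rate-loss property of the quadratic Gaussian Wyner--Ziv problem to obtain the clean closed form for $R_{\text{WZ}}^{(i)}$. Everything else is direct substitution and the algebra exhibited above.
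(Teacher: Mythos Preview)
Your proposal is correct and follows essentially the same route as the paper: invoke Lemma~\ref{Ash} with independent Gaussian inputs to obtain $I(X_i;Y_j|X_j)=\tfrac12\log(1+\gamma_i)$, plug in the quadratic Gaussian Wyner--Ziv rate--distortion function $R_{\text{WZ}}^{(i)}(D)=\tfrac12\log^{+}\frac{1-\rho^2}{D}$, and rearrange (\ref{Ashe}) to get $D_i>(1-\rho^2)/(1+\gamma_i)^{1/r}$, reaching the boundary by a limiting/closure argument. The paper does exactly this (citing \cite{Han84} for the mutual-information value and \cite{WZ76} for the Wyner--Ziv formula), using a parameter $\epsilon>0$ in the same way you use $\eta$; your remarks on the continuous-alphabet cost-constrained extension and the no-rate-loss property are reasonable technical caveats that the paper leaves implicit.
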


\begin{proof}
Based on Lemma~\ref{Ash}, it suffices to check if (\ref{Ashe}) holds or not. 
First, it was shown in \cite{Han84} that the channel capacity of each direction of the two-way transmission over AWGN-TWCs is identical to the one over point-to-point AWGN channels at the same SNR. Thus, $I(X_i; Y_j|X_j)=1/2\cdot\log(1+\gamma_i)$, where $i, j=1, 2$ and $i\neq j$. 
Note that, the capacity is achieved by using independent channel inputs as required in Lemma~\ref{Ash}. 
Second, the Wyner-Ziv RD function of joint Gaussian sources under the squared error distortion measure is given by \cite{WZ76}
\begin{equation}
R^{(i)}_{\text{WZ}}(D) =
\begin{cases}
\frac{1}{2}\log\frac{1-\rho^2}{D} & 0<D\le 1-\rho^2,
\\
0 & D>1-\rho^2. 
\end{cases}
\label{RDWZ22}
\end{equation}
Let $D_i=(1-\rho^2)/((1+\gamma_i)^{1/r}-\epsilon)$ for some $\epsilon>0$. 
We immediately obtain $K/2\cdot\log ((1+\gamma_i)^{1/r}-\epsilon) < N/2\cdot\log(1+\gamma_i)$ for all $\epsilon>0$. 
Clearly, (\ref{Ashe}) holds and hence $D_i=(1-\rho^2)/(1+\gamma_i)^{1/r}$ is achievable for $i=1, 2$. 
\end{proof}

In fact, since $R^{(i)}(D)-I(U_1; U_2)=1/2 \cdot\log((1-\rho^2)/D)$ for $i=1, 2$, Lemma \ref{lemma3} can be expressed in terms of the Wyner-Ziv RD funcitons of (\ref{RDWZ22}), i.e.,   
\begin{equation}
\left\{ 
\begin{IEEEeqnarraybox}[][c]{l}
\IEEEyesnumber\label{ob2}
r\cdot R^{(1)}_{\text{WZ}}(D_1)\le  \frac{1}{2}\log(1+\gamma_1),\\ 
r\cdot R^{(2)}_{\text{WZ}}(D_2)\le  \frac{1}{2}\log(1+\gamma_2),
\end{IEEEeqnarraybox}
\right.
\end{equation}

Combining Lemma~\ref{thm5} and (\ref{ob2}), we obtain a complete joint source-channel coding theorem for Gaussian TWSC systems. 

\begin{theorem}
For the rate-$r$ lossy transmission of zero mean, unit variance, and correlation $\rho$ jointly Gaussian source $(U_1, U_2)$ over the memoryless AWGN-TWC with SNRs $\gamma_1$ and $\gamma_2$, $(D_1, D_2)$ is achievable if and only if 
\begin{equation}
\left\{ 
\begin{IEEEeqnarraybox}[][c]{l}
\IEEEnonumber
r\cdot R^{(1)}_{\text{WZ}}(D_1)\le  \frac{1}{2}\log(1+\gamma_1),\\ 
r\cdot R^{(2)}_{\text{WZ}}(D_2)\le  \frac{1}{2}\log(1+\gamma_2).
\end{IEEEeqnarraybox}
\right.
\end{equation}
\label{lemma4}
\end{theorem}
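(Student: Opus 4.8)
The plan is to prove the two directions separately: the ``if'' part is exactly the achievability Lemma~\ref{thm5}, and the ``only if'' part is the converse Lemma~\ref{lemma3} specialized to the AWGN-TWC, i.e.\ the reformulation (\ref{ob2}).

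\emph{Achievability.} First I would observe that the stated condition is equivalent to the distortion constraint appearing in Lemma~\ref{thm5}. For $0<D_i\le 1-\rho^2$ one has $R^{(i)}_{\text{WZ}}(D_i)=\frac12\log\frac{1-\rho^2}{D_i}$ by (\ref{RDWZ22}), so $r\cdot R^{(i)}_{\text{WZ}}(D_i)\le\frac12\log(1+\gamma_i)$ rearranges directly to $D_i\ge(1-\rho^2)/(1+\gamma_i)^{1/r}$; for $D_i>1-\rho^2$ both $R^{(i)}_{\text{WZ}}(D_i)=0\le\frac12\log(1+\gamma_i)$ and $D_i>1-\rho^2>(1-\rho^2)/(1+\gamma_i)^{1/r}$ hold automatically since $\gamma_i>0$. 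Hence any $(D_1,D_2)$ satisfying the theorem's condition satisfies $D_i\ge(1-\rho^2)/(1+\gamma_i)^{1/r}$ for $i=1,2$, and Lemma~\ref{thm5} yields achievability.

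\emph{Converse.} Given an achievable $(D_1,D_2)$, I would re-run the chain in the proof of Lemma~\ref{lemma3}, now tracking the transmit-power constraint $\frac1N\sum_{n=1}^N\mathbb{E}[|X_{1n}|^2]\le P_1$. The steps up to the appearance of $\sum_{n=1}^N I(X_{1n};Y_{2n}|X_{2n})$ carry over (with differential entropies in place of entropies), being mutual-information/chain-rule/data-processing manipulations that are alphabet-agnostic. The AWGN-specific step is to bound this sum under the power constraint: writing $I(X_{1n};Y_{2n}|X_{2n})=h(X_{1n}+Z_{2n}\mid X_{2n})-\frac12\log(2\pi e\sigma_2^2)$ with $h(\cdot)$ the differential entropy, then using that conditioning reduces differential entropy, that $Z_{2n}$ is independent of $(X_{1n},X_{2n})$ at time $n$ (so $\mathrm{Var}(X_{1n}+Z_{2n})=\mathrm{Var}(X_{1n})+\sigma_2^2\le\mathbb{E}[|X_{1n}|^2]+\sigma_2^2$), and the Gaussian maximum-entropy bound, one obtains $I(X_{1n};Y_{2n}|X_{2n})\le\frac12\log(1+\mathbb{E}[|X_{1n}|^2]/\sigma_2^2)$; averaging over $n$, Jensen's inequality and the power constraint give $\frac1N\sum_{n=1}^N I(X_{1n};Y_{2n}|X_{2n})\le\frac12\log(1+\gamma_1)$. (This reproduces Han's result \cite{Han84} that correlated inputs and nonuniform power allocation do not exceed point-to-point AWGN capacity.) Substituting back, dividing by $K$, letting $K\to\infty$, and using $\limsup_K D_{f_1}\le D_1$ together with the monotonicity and continuity of $R^{(1)}$ gives $R^{(1)}(D_1)\le I(U_1;U_2)+r^{-1}\cdot\frac12\log(1+\gamma_1)$, and symmetrically for terminal~$2$. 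Finally, using the identity $R^{(i)}(D)-I(U_1;U_2)=\frac12\log\frac{1-\rho^2}{D}$ (noted before the theorem), which coincides with $R^{(i)}_{\text{WZ}}(D)$ when $0<D\le 1-\rho^2$, while $R^{(i)}_{\text{WZ}}(D)=0$ makes the required inequality trivial for larger $D$, one arrives at $r\cdot R^{(i)}_{\text{WZ}}(D_i)\le\frac12\log(1+\gamma_i)$, i.e.\ (\ref{ob2}).

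\emph{Main obstacle.} The only nontrivial point is the power-constrained converse: Lemma~\ref{lemma3} treats the term $I(X_1;Y_2|X_2)$ abstractly (through its concavity in the input law), whereas over the AWGN-TWC one must additionally rule out any benefit from correlating $X_1$ with $X_2$ or from time-varying power allocation. The maximum-entropy computation sketched above handles this and recovers Han's AWGN-TWC capacity \cite{Han84}; the single-letterization chain of Lemma~\ref{lemma3} and the limiting argument are then routine.
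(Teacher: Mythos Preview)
Your proposal is correct and follows essentially the same route as the paper: achievability is Lemma~\ref{thm5}, and the converse is Lemma~\ref{lemma3} rewritten via the identity $R^{(i)}(D)-I(U_1;U_2)=R^{(i)}_{\text{WZ}}(D)$ as (\ref{ob2}). The only difference is that you spell out the power-constrained bound $I(X_{1n};Y_{2n}|X_{2n})\le\frac12\log(1+\gamma_1)$ via the Gaussian maximum-entropy argument, whereas the paper simply invokes Han's AWGN-TWC capacity result \cite{Han84} for this step.
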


\begin{figure}[!tb]
\vspace{-0.2cm}
\centering
\includegraphics[draft=false, scale=0.36]{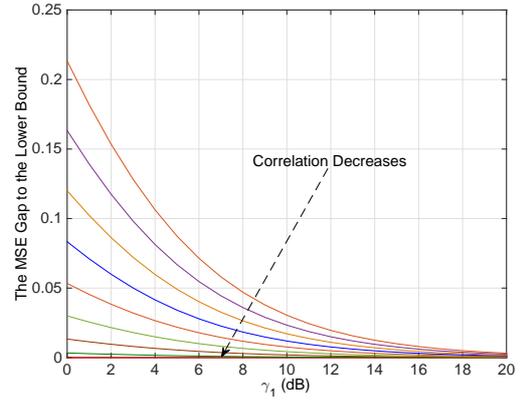} 
\caption{The performance loss of transmitting jointly Gaussian sources via scalar coding. The curves from top to bottom correspond to $\rho$ ranging from $0.9$ to $0$ with a step size of $0.1$.}\vspace{-0.4cm}
\label{GAP}
\end{figure}

\section{Conclusion}
As a first step towards understanding joint source-channel coding over Shannon's TWCs, we developed bounds on the performance of lossy transmission over TWCs. The optimality of the (simplest) scalar coding scheme is also examined for two classes of TWCs with additive noise. 
In addition to the examples given in this paper, similar results can be obtained for TWCs with erasures.  
It is observed that scalar coding is usually sub-optimal for correlated sources. 
We also provided a joint source-channel coding theorem for the lossy transmission of correlated Gaussian sources over AWGN-TWCs. 
Identifying general conditions under which two-way source-channel scalar coding is optimal is an interesting future direction.


\begin{thebibliography}{99}
\bibitem{Shannon61} C.~E.~Shannon, ``Two-way communications channels,'' in {\it Proc. 4th Berkeley Symp. Math. Stat. Probab.,} Chicago, IL, USA, Jun. 1961, pp. 611-644.

\bibitem{Ong06} L.~Ong and M.~Motani, ``The multiple access channel with feedback and correlated sources,'' in {\it Proc. IEEE Int. Symp. Inf. Theory}, Seattle, WA, USA, Jul. 2006, pp. 2129-2133. 

\bibitem{Gunduz09} D.~G{\"u}nd{\"u}z, E.~Erkip, A.~Goldsmith, and H.~V.~Poor, ``Source and channel coding for correlated sources over multiuser channels,'' {\it IEEE Trans. Inf. Theory,} vol. 55, no. 9, pp. 3927-3944, Sep. 2009. 

\bibitem{Han84} T.~S.~Han, ``A general coding scheme for the two-way channel,'' {\it IEEE Trans. Inf. Theory,} vol. IT-30, no. 1, pp. 35-44, Jan. 1984. 

\bibitem{Cheng14} Z.~Cheng and N.~Devroye, ``Two-way networks: when adaptation is useless,'' {\it IEEE Trans. Inf. Theory,} vol. 60, no. 3, pp. 1793-1813, Mar. 2014. 

\bibitem{Lin16} L.~Song, F.~Alajaji, and T. Linder, ``Adaptation is useless for two discrete additive-noise two-way channels,'' in {\it Proc. IEEE Int. Symp. Inf. Theory}, Barcelona, Spain, Jul. 2016, pp. 1854-1858. 

\bibitem{Kaspi85} A.~H.~Kaspi, ``Two-way source coding with a fidelity criterion,'' {\it IEEE Trans. Inf. Theory,} vol. IT-31, no. 6, pp. 735-740, Nov. 1985. 

\bibitem{Maor06} A.~Maor and N.~Merhav, ``Two-way successively refined joint source-channel coding,'' {\it IEEE Trans. Inf. Theory,} vol. 52, no. 4, pp. 1483-1494, Apr. 2006. 

\bibitem{Gastpar03} M.~Gastpar, B.~Rimoldi, and M.~Vetterli, ``To code, or not to code: lossy source-channel communication revisited,'' {\it IEEE Trans. Inf. Theory,} vol. 49, no. 5, pp. 1147-1158, May 2003.

\bibitem{TC91} T.~M.~Cover and J.~A.~Thomas, {\it Elements of Information Theory,} 2nd ed., New York, USA: John Wiley \& Sons, 2006. 

\bibitem{WZ76} A.~D.~Wyner and J.~Ziv, ``The rate-distortion function for source coding with side information at the decoder,'' {\it IEEE Trans. Inf. Theory,} vol. 22, no. 1, pp. 1-10, Jan. 1976. 



\bibitem{Berger71} T.~Berger, {\it Rate Distortion Theory: A Mathematical Basis for Data Compression}, Englewood Cliffs, NJ, USA, Prentice-Hall, 1971.  

\end{thebibliography}
\end{document}